\documentclass[12pt]{article}

\usepackage{multirow}
\usepackage{siunitx}
\usepackage[margin=1in]{geometry}
\usepackage{setspace}
\usepackage{placeins}
\usepackage{algpseudocode}
\usepackage{amsmath}    
\usepackage{amssymb}    
\usepackage{amsthm}     
\usepackage{mathtools}  
\usepackage{nccmath}    
\usepackage{bm}         

\usepackage{graphicx}
\usepackage{subcaption}

\usepackage{booktabs}   
\usepackage{tabularx}   
\usepackage{siunitx}    

\usepackage[linesnumbered,ruled,vlined]{algorithm2e}
\SetKwInput{KwInput}{Input} 
\SetKwInput{KwOutput}{Output} 

\usepackage[colorlinks,citecolor=blue,urlcolor=blue]{hyperref}

\usepackage{natbib}

\usepackage[normalem]{ulem} 

\newcommand{\Wcal}{\mathcal{W}}

\def\R{\mathbb{R}} 

\def\bq{\boldsymbol{q}}

\theoremstyle{definition}
\newtheorem{definition}{Definition}[section]

\theoremstyle{plain}
\newtheorem{theorem}{Theorem}[section]
\newtheorem{proposition}[theorem]{Proposition}

\def\spacingset#1{\renewcommand{\baselinestretch}%
{#1}\small\normalsize}

\title{Fast distance computation of multivariate distributions via nonparanormal transport}
\author{
Edward Shao\thanks{These authors contributed equally to this work.} $\, ^{1}$, Junyoung Park\footnotemark[1] $\, ^{1}$,\\
Naresh Punjabi$^{2}$, Hui Jiang$^{1}$, Irina Gaynanova$^{1}$\\[1ex]
$^{1}$Department of Biostatistics, University of Michigan\\
$^{2}$Miller School of Medicine, University of Miami
}
\date{}

\begin{document}
\maketitle
\begin{abstract}
With the increasing availability of data objects in the form of probability distributions, there is a growing need for statistical methods tailored to distributional data. Distance measures, especially the pairwise distance matrix between data objects, provide the foundation for a wide range of modern data analysis methods, such as clustering, multidimensional scaling, and distance-based regression, among others. The Wasserstein distance is commonly used with distributional data due to its compelling optimal transport property. However, while the Wasserstein distance can be efficiently computed for univariate distributions, its application to multivariate distributions is limited due to high computational costs. To address these scalability issues, we introduce the Nonparanormal Transport (NPT) metric, a closed-form distance based on the flexible nonparanormal distribution family for modeling skewed and non-Gaussian multivariate data. Simulation studies demonstrate that NPT maintains a high level of agreement with the Wasserstein distance, while being at least 2,460 times faster than its efficient variants when computing a 100-distribution pairwise distance matrix in both 2 and 5 dimensions. We illustrate the utility of NPT through a multidimensional scaling analysis of bivariate oxygen desaturation distributions of 723 individuals with sleep apnea in the Sleep Heart Health Study.
\end{abstract}

\noindent%
{\it Keywords:} Distributional Learning; Gaussian Copula Model; Multidimensional Scaling; Wasserstein Distance
\section{Introduction}\label{sec-intro}

Distributional learning, which handles probability distributions as data objects, has gained interest due to increasingly available distributional data, such as from wearable devices, brain functional connectivity patterns, and demographic profiles across countries \citep{petersenModelingProbabilityDensity2022}. These distributions can be viewed as objects in a metric space, where the Wasserstein distance quantifies the minimal transport cost required to transform one distribution into another, thereby encoding the intrinsic geometry of the underlying domain of distributions \citep{villani2009optimal}. Owing to this geometric appeal, the Wasserstein distance has become widely used in modern statistical and machine learning applications, such as clustering, generative modeling, and imaging analysis \citep{zhuang2022wasserstein, verdinelliHybridWassersteinDistance2019, arjovsky2017wasserstein, kolouriOptimalMassTransport2017, kolouriSlicedWassersteinAutoEncoders2019}.

However, computing the Wasserstein distance for multivariate distributions poses significant challenges. While the Wasserstein distance admits closed-form expressions for univariate distributions and multivariate Gaussian distributions, such formulas are generally unavailable for more complex multivariate distributions. In those cases, the distance must be computed using optimization-based methods, which are computationally intensive. Given two empirical distributions with $n$ observations for each, their Wasserstein distance is typically computed through a linear programming approach, which requires $O(n^3\log n)$ computations \citep{cuturi2013sinkhorn}, becoming quickly unscalable as $n$ grows.
This problem is exacerbated when constructing a pairwise distance matrix across $N$ distributions, a central object when applying methods such as clustering or multidimensional scaling. As Wasserstein distance needs to be computed independently each time, this leads to $O(N^2n^3\log n)$ computations.

To mitigate such high computational costs, several variants of the Wasserstein distance have been proposed. The Hybrid Wasserstein distance was developed for distributional clustering purposes, based on the closed form of the Wasserstein distance for the Gaussian distribution, with the remaining discrepancy captured by tangent space approximations \citep{verdinelliHybridWassersteinDistance2019}. The sliced Wasserstein distance compares distributions by performing $L$ random projections onto one-dimensional subspaces, where the Wasserstein distance admits a closed form, followed by averaging the projected distances \citep{bonnotte2013unidimensional, bonneelSlicedRadonWasserstein2015}. Alternatively, entropic regularization can be incorporated into the Wasserstein distance optimization, enabling fast computation through the Sinkhorn algorithm \citep{cuturi2013sinkhorn}. Although these variants are faster than the Wasserstein distance, they can still be computationally costly and require parameter tuning that affects both accuracy and runtime.

To overcome these limitations, we introduce the Nonparanormal Transport (NPT) metric, a closed-form distance between multivariate distributions. NPT is defined on the space of nonparanormal distributions, that is, those following the Gaussian copula model \citep{liuNonparanormalSemiparametricEstimation2009}, which go beyond multivariate Gaussian distributions by allowing flexible modeling of univariate marginals, including skewed distributions. The NPT metric consists of two distinct components: the univariate Wasserstein distances across all marginals and the Wasserstein distance measuring the discrepancy in the underlying latent Gaussian dependency structure. Both components are closed form, leading to fast computations.

A particular computational advantage of NPT emerges when calculating a pairwise distance matrix across $N$ distributions. The computation of the first component of NPT, univariate Wasserstein distances across all marginals, requires encoding each marginal distribution with a quantile function. The computation of the second component of NPT requires estimation of the correlation matrix for the latent Gaussian distribution based on Kendall's $\tau$ \citep{liuHighdimensionalSemiparametricGaussian2012}. Both quantile functions and correlation matrices can be computed once for each of $N$ distributions in $\R^d$, requiring $O(Nd^2n\log n)$ computations. Once these precomputations are done, the subsequent computation of the NPT distance matrix is very fast, with full computation complexity being dominated by the precomputation step. Simulation studies confirm this, showing that NPT can construct distance matrices for $N=100$ distributions at least 2,460 times faster than competing metrics for $d=2,5$, while maintaining comparable, and often superior, accuracy in approximating the true Wasserstein distance.

In summary, our main contribution is the introduction of a novel, closed-form NPT metric for multivariate distributions. NPT is significantly faster than existing alternatives, while maintaining high concordance with the true Wasserstein distance in simulations.  We illustrate the utility of NPT through a multidimensional scaling analysis of bivariate oxygen desaturation distributions of 723 individuals with sleep apnea in the Sleep Heart Health Study \citep{Punjabi2008}.

\section{Methodology}\label{sec-meth}

In this section, we introduce the NPT metric for computing pairwise distances between multivariate distributions at large scale.
In Section~\ref{sec:wasserstein-review}, we first review the 2-Wasserstein distance.
Inspired by its known closed-form for univariate and multivariate Gaussian distributions, we define the NPT metric under the nonparanormal model and describe the computational details in Section~\ref{sec:npt-def}. We also discuss standardized distance considerations in Section~\ref{sec:scale}.

We let $\{P_i\}_{i=1}^N$ be a sample of $N$ continuous multivariate distributions on $\R^d$ with finite second moments: $\int_{\R^d} \|x\|^2 dP_j(x) < \infty$, where $\|x\|$ denotes the $\ell^2$ Euclidean metric of $x\in\R^d$. We let $X^{(i)} = (X^{(i)}_{1}, \dots, X^{(i)}_d)^\top \in \mathbb{R}^d$ be a random vector with distribution according to $P_i$, denoted as $X^{(i)}\sim P_i$. For each $P_i$, we assume we observe $n_i$ realizations of $X^{(i)}$, denoted by $x^{(i)}_k =  (x_{k,j}^{(i)})_{j=1}^d \in \R^d$ for $k = 1, \dots, n_i$ and $j = 1, \dots, d$. We let $\mathbf{1_N}$ denote the vector of all ones of length $N$, and let $\Phi$ denote the cdf of the univariate standard normal distribution.

\subsection{Wasserstein distance}\label{sec:wasserstein-review}

 The 2-Wasserstein distance between distributions $P_i$ and $P_l$ is defined as   
\begin{equation}\label{eq:multivariate}
    d^2_\mathcal{W}(P_i, P_l) = \inf_{\gamma \in \Gamma(P_i,P_l)} \int_{\mathbb{R}^d \times \mathbb{R}^d} \|x - y\|_2^2 \, d\gamma(x, y),
 \end{equation}
 where the infimum is taken over the set $\Gamma(P_i,P_l)$ of all possible joint distributions on $\mathbb{R}^d \times \mathbb{R}^d$ having marginals $P_i$ and $P_l$. For brevity, we omit the index and refer to this as the Wasserstein distance throughout the manuscript. For random vectors $X^{(i)}\sim P_i$ and $X^{(l)}\sim P_l$, we also write $d^2_\Wcal (X^{(i)}, X^{(l)}) = d^2_\Wcal(P_i, P_l)$ interchangeably.  The Wasserstein distance~\eqref{eq:multivariate} is an attractive distance choice due to its optimal transport interpretation \citep{villani2009optimal}. However, its empirical computation is challenging in dimensions $d\ge 2$ and requires solving a linear program with complexity $O(n^3 \log n)$ where $n = \max(n_i, n_l)$ \citep{cuturi2013sinkhorn}.

There are two special cases where the computations are significantly reduced due to the closed-form of~\eqref{eq:multivariate}. 
 When $d = 1$, the Wasserstein distance can be computed using the quantile functions: letting $F_{i}$ denote the cumulative distribution function of $P_i$ and $q_{i}(p) = \inf\{x \in \mathbb{R}: F_{i}(x) \geq p\}$ the corresponding quantile, the infimum in  \eqref{eq:multivariate} simplifies to\begin{equation}\label{eq:univariate}
     d^2_\mathcal{W}(P_i,P_l) = \int_{0}^{1} [q_{i}(p) - q_{l}(p)]^2 dp.
 \end{equation}
 When $d\geq 1$ and underlying distributions are multivariate Gaussian, that is $P_i= \mathcal{N}_d(m_i,\Sigma_i)$ and $P_l=\mathcal{N}_d(m_l,\Sigma_l)$, \eqref{eq:multivariate} simplifies to
\begin{equation}\label{eq:normal}
d_\mathcal{W}^2(P_i, P_l) = \| m_i - m_l \|_2^2 + \mathrm{Tr}\Big[ \Sigma_i + \Sigma_l - 2 (\Sigma_i^{1/2} \Sigma_l \Sigma_i^{1/2})^{1/2} \Big].
\end{equation}
 The trace term corresponds to the squared Bures distance, a metric on the space of positive semi-definite matrices \citep[Remark~2.31]{peyreComputationalOptimalTransport2019}.

\subsection{Nonparanormal transport metric}\label{sec:npt-def}

While~\eqref{eq:normal} provides a convenient closed-form for multivariate Gaussian distributions, the assumption of Gaussianity is too restrictive. We propose to relax this assumption by considering the nonparanormal distribution family.

\begin{definition}
[Nonparanormal Distribution \citep{liuNonparanormalSemiparametricEstimation2009}]
    The continuous random vector $X^{(i)}$ is said to follow a \textit{nonparanormal distribution} if there exist monotone increasing functions $f_i = (f_{i,j}:\R\to\R)_{j=1}^d$ such that
 \[
 Z_{i,j} = f_{i,j}(X^{(i)}_{j}) \quad \text{and} \quad Z_i = (Z_{i,1},\ldots,Z_{i,d})^\top \sim \mathcal{N}_d(0, \Sigma_i),
 \] 
 where $\Sigma_i$ is the latent correlation matrix. In this case, we also write
$
 X^{(i)} \sim \operatorname{NPN}(0, \Sigma_i, f_i). 
$ 
\end{definition}

The nonparanormal distribution allows the random vector $X^{(i)}$ to have flexible marginal distributions, as the monotone transformation $f_{i, j} = \Phi^{-1}\circ F_{i, j}$, where $F_{i,j}$ is the cumulative distribution function of $X^{(i)}_{j}$ and $\Phi^{-1}$ sends $X^{(i)}_{j}$ to the standard Gaussian variable $Z_{i, j}$. The dependencies between such flexible marginals are captured by the latent correlation matrix $\Sigma_i$ of the latent Gaussian vector $Z_i$. 
Consequently, the joint nonparanormal distribution $P_i$ of $X^{(i)}$ is uniquely characterized by the univariate marginals of $X^{(i)}$ and the latent correlation matrix $\Sigma_i$.
 
 Motivated by this structure, we propose a surrogate Wasserstein distance within the nonparanormal model, which we call the nonparanormal transport metric or NPT.
 \begin{definition}[NPT]
     For $X^{(i)} \sim \mathrm{NPN}(0, \Sigma_i, f_i) = P_i$ and $X^{(l)} \sim \mathrm{NPN}(0, \Sigma_l, f_l) = P_l$, we define the \textit{nonparanormal transport} (NPT) metric as 
 \begin{equation}\label{eq:proxy}
d^2(P_i, P_l) = d^2_{\mathrm{NPT}}(X^{(i)}, X^{(l)}) 
= \sum_{j=1}^{d} d^2_{\mathcal{W}}(X^{(i)}_{j}, X^{(l)}_{j}) 
+ \mathrm{Tr}\Big[\Sigma_i + \Sigma_l - 2 (\Sigma_i^{1/2} \Sigma_l \Sigma_i^{1/2})^{1/2}\Big].
\end{equation}

 \end{definition}

 In other words, $d_{NPT}$  consists of two parts, both of which have closed-form: the Wasserstein distance between univariate distributions \eqref{eq:univariate} (capturing the marginals), and the Wasserstein distance between latent Gaussian distributions \eqref{eq:normal} (capturing the dependency structure). It satisfies the metric axioms, as formalized below. 
\begin{proposition}\label{prop:npt-metric}
    The distance $d_{NPT}$ defines a metric on the space of nonparanormal distributions with finite second moments.
\end{proposition}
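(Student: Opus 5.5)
The plan is to write $d_{NPT}$ as an $\ell^2$-type product of metrics and inherit the axioms from the components. Identify each nonparanormal $P_i$ (with finite second moments) with the tuple $(P_{i,1},\dots,P_{i,d},\Sigma_i)$ of its univariate marginals and its latent correlation matrix. Then
\[
d_{NPT}(P_i,P_l)=\Big(\sum_{j=1}^d d_\Wcal^2(P_{i,j},P_{l,j}) + d_B^2(\Sigma_i,\Sigma_l)\Big)^{1/2},
\]
where $d_B$ is the Bures distance appearing in \eqref{eq:normal}. Each component is a metric on its own space:
\begin{itemize}
\item $d_\Wcal$ is a metric on univariate distributions with finite second moment. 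The marginals have finite second moments because $P_i$ does.
\item $d_B$ is a metric on positive semidefinite matrices, as recalled after \eqref{eq:normal}. Equivalently, $d_B(\Sigma_i,\Sigma_l)=d_\Wcal(\mathcal N_d(0,\Sigma_i),\mathcal N_d(0,\Sigma_l))$, which is a restriction of the Wasserstein metric.
\end{itemize}

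The axioms then follow in order.
\begin{itemize}
\item \emph{Nonnegativity and finiteness.} Each term is finite and nonnegative.
\item \emph{Symmetry.} Both $d_\Wcal$ and $d_B$ are symmetric. For the trace term, $\mathrm{Tr}(\Sigma_i^{1/2}\Sigma_l\Sigma_i^{1/2})^{1/2}=\mathrm{Tr}(\Sigma_l^{1/2}\Sigma_i\Sigma_l^{1/2})^{1/2}$, since both equal the nuclear norm of $\Sigma_i^{1/2}\Sigma_l^{1/2}$.
\item \emph{Triangle inequality.} Set $a=(d_\Wcal(P_{i,j},P_{k,j}))_j$ together with $d_B(\Sigma_i,\Sigma_k)$, and let $b$ be the analogous vector for the pair $(k,l)$. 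Applying the triangle inequality componentwise gives a vector bounded entrywise by $a+b$. Minkowski's inequality in $\R^{d+1}$ then yields $d_{NPT}(P_i,P_l)\le \|a+b\|_2\le \|a\|_2+\|b\|_2$.
\end{itemize}

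The main step is \emph{identity of indiscernibles}. If $d_{NPT}(P_i,P_l)=0$, every term vanishes, so $P_{i,j}=P_{l,j}$ for all $j$ and $\Sigma_i=\Sigma_l$. We must then show $P_i=P_l$. Because the marginals are continuous, $F_{i,j}=F_{l,j}$. The transformation $f_{i,j}=\Phi^{-1}\circ F_{i,j}$ is determined by the marginal, as noted after the definition, so $f_i=f_l$. Both vectors are therefore pushforwards of the same Gaussian $\mathcal N_d(0,\Sigma_i)$ under the same coordinatewise inverse map. Equivalently, they share the same Gaussian copula and the same marginals, so Sklar's theorem gives $P_i=P_l$.

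The subtle point is that $f_{i,j}$ may fail to be strictly invertible off the support. I would handle this by working with the joint cdf directly:
\[
P(X^{(i)}\le x)=\Phi_{\Sigma_i}\big(\Phi^{-1}(F_{i,1}(x_1)),\dots,\Phi^{-1}(F_{i,d}(x_d))\big).
\]
This expression depends only on the marginals and $\Sigma_i$, and a distribution is determined by its joint cdf. The converse direction, $d_{NPT}(P,P)=0$, is immediate.
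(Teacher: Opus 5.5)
Your proposal is correct and follows essentially the same route as the paper. Both write $d_{NPT}$ as an $\ell^2$ combination of the marginal Wasserstein metrics and the Bures metric on latent correlation matrices, get the triangle inequality from the componentwise triangle inequalities plus Minkowski's inequality, and get identity of indiscernibles from the fact that a nonparanormal law is determined by its marginals and $\Sigma$. The only difference is that you verify that last fact directly through the joint cdf formula $\Phi_{\Sigma_i}(\Phi^{-1}(F_{i,1}(x_1)),\dots,\Phi^{-1}(F_{i,d}(x_d)))$ and check symmetry of the Bures term explicitly, whereas the paper simply cites Liu et al.\ (2009) for the uniqueness.
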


The computation of the NPT metric involves two precomputation steps for each distribution: estimating the quantiles for marginal distributions and estimating the latent correlation matrices.

First, to utilize the quantile-based closed form \eqref{eq:univariate}, we discretize the quantile function by evaluating empirical quantiles at the $m$ equally spaced grid points $0 = u_1 < \ldots < u_m = 1$, as considered in \citet{petersenFrechetRegressionRandom2019, coulterFastVariableSelection2025}. For each $j$th marginal of the distribution $P_i$, we compute and store the corresponding quantile vector $\bq_{ij} = (q_{ij}(u_1), \ldots, q_{ij}(u_m))^\top \in\R^m$. This single step involves sorting samples $\{x^{(i)}_{k, j}\}_{k=1}^{n_i}$ to estimate the empirical quantiles, which has complexity $O(n_i \log n_i)$,  followed by quantile evaluation at the grid points with complexity $O(m)$. Consequently, the total cost of this step for each distribution $P_i$ is $O\left(d(n_i \log n_i + m)\right)$.

Second, each latent correlation matrix $\Sigma_i$ is estimated based on Kendall's $\tau$ \citep{liuHighdimensionalSemiparametricGaussian2012}. For each pair of variables $j$ and $j'$, the Kendall's $\tau$ is computed as
\begin{equation}
\hat{\tau}_{j,j'}^{(i)} = \frac{2}{n_i (n_i - 1)} 
\sum_{1 \le k < k' \le n_i} 
\operatorname{sign}\big(x_{k,j}^{(i)} - x_{k',j}^{(i)}\big)
\operatorname{sign}\big(x_{k,j'}^{(i)} - x_{k',j'}^{(i)}\big),
\end{equation}
which is transformed to the latent correlation via
$$\hat\rho_{j, j'}^{(i)} = \sin\big(\pi \hat\tau_{j, j'}^{(i)} / 2\big).$$
This latent correlation computation is dominated by the cost of computing Kendall's $\tau$, which is $O(n_i\log n_i)$ per pair of coordinates, leading to overall cost $O(d^2n_i\log n_i)$ to form the elementwise estimator $\tilde\Sigma_i = (\hat\rho_{j, j'}^{(i)})_{j, j'=1}^d$. Because this elementwise estimator may not be positive semidefinite when $d>2$, we project each $\tilde\Sigma_i$ onto the positive definite cone, resulting in the final estimator $\widehat\Sigma_i$ with additional complexity $O(d^3)$. Thus, the total cost of estimating the latent correlation matrix $\Sigma_i$ is $O(d^2n_i\log n_i + d^3)$.

Given the quantile vectors $\{\bq_{ij}\}_{j=1}^d$ and latent correlation matrices $\widehat\Sigma_i$ for each $P_i$, we evaluate the marginal components of $d^2(P_i, P_l)$ in \eqref{eq:proxy} using the trapezoidal rule, and the latent Gaussian component using $\widehat\Sigma_i$ and $\widehat\Sigma_l$. The combined cost is $O(md + d^3)$. In practice, achieving a reliable estimation of each distribution $P_i$ typically requires $n_i \gg d$ (e.g., $n_i >d^2$ or $d^3$), and we set $m\le n_i$ since increasing $m$ beyond this yields negligible gains in approximating marginal quantiles. Therefore, the overall cost of NPT is largely dominated by the precomputation step, scaling roughly as $O(d^2n_i\log n_i)$ per distribution. This structure makes NPT particularly advantageous for distance matrix computations: with $N$ distributions and fixed $n=n_i$, $\forall i$, for simplicity, constructing the NPT distance matrix scales as $O(Nd^2n\log n + N^2(md +d^3))$. As the sample size $n$ is decoupled from the quadratic term $N^2$, this leads to faster computations than the Wasserstein distance ($O(N^2n^3\log n)$), the Sinkhorn distance ($O(N^2n^2)$ per iteration, with regularization parameter affecting the number of iterations required to reach convergence), and the Sliced Wasserstein distance with $L$ slices ($O(N^2Ln\log n)$).

\subsection{Standardized distance}\label{sec:scale}

The Wasserstein distance is known to be invariant to the fixed location shift, as for any $x\in \R^d$, it holds that
$
d^2_\Wcal (X^{(i)} + x, X^{(l)} + x) = d^2_\Wcal (X^{(i)}, X^{(l)})
$.
However, the Wasserstein distance is sensitive to scale. For univariate distance and scalar $a>0$, it holds that $d^2_{\mathcal{W}}(aX^{(i)}_{j}, aX^{(l)}_{j}) =  {\color{red} a^2} d^2_{\mathcal{W}}(X^{(i)}_{j}, X^{(l)}_{j})$ \citep{panaretos2019wasserstein}. Since the latent correlation component of NPT is scale-invariant, large scale differences in the marginal distributions can dominate the total distance, potentially obscuring meaningful differences in the underlying correlation structure. Furthermore, without adjustment, the dimensions corresponding to the largest scale will disproportionately dominate the marginal component.

To ensure that each marginal component contributes equally, we standardize the $N$ distributions across each $j$th dimension. Let $\mathcal{X}_j = \{ x_{k,j}^{(i)} : i=1,\dots,N;\; k=1,\dots,n_i \}$ be the pooled collection of realizations from the $j$th dimension across all $N$ distributions.  Let $\mathrm{med}(\mathcal{X}_j)$ be the corresponding median and $\mathrm{MAD}(\mathcal{X}_j)=\mathrm{med}_{x\in\mathcal{X}_j}\left|x-\mathrm{med}(\mathcal{X}_j)\right|$ the median absolute deviation. Then we scale realizations as
$
z_{k,j}^{(i)} = (x_{k,j}^{(i)} - \mathrm{med}(\mathcal{X}_j))/\mathrm{MAD}(\mathcal{X}_j). 
$
Subsequently, we evaluate NPT and Wasserstein distances using these standardized distributions.

\section{Simulation Study}\label{sec:sim}

We generate synthetic data to evaluate the runtime and accuracy of NPT relative to the Wasserstein distance and its existing variants. We consider two dimensions: bivariate distributions ($d=2$) and five-dimensional ($d=5$), with $N=100$ distributions generated for each setting. We fix the number of realizations at $n = n_i = 200$ for each of the $N$ distributions across all settings. Each $i$th distribution is defined as a nonparanormal distribution $P_i = \mathrm{NPN}\!\left(0,\, \Sigma_i,\, g_{\lambda_i}\right)$, where the latent correlation matrices follow a Toeplitz autocorrelation structure:
\[
\Sigma = \left( \rho_i^{\,|j-k|} \right)_{j,k=1}^{d}. 
\]
The marginal transformations are fixed across dimensions and defined by the mixture
\[
g^{-1}_{\lambda_i}(x) = (1-\lambda_i)e^{0.6x} + \lambda_i \bigl(0.8 \log(1 + 0.8 e^x)\bigr).
\]
The parameter pairs $(\rho_i,\lambda_i)$ are taken from a $10 \times 10$ rectangular grid with $\rho_i \in [-0.8, 0.8]$ and $\lambda_i \in [0,1]$ leading to $N=100$ distinct distributions. This grid-based construction ensures a comprehensive evaluation across a diverse range of distance values, encompassing distributions with varying degrees of marginal similarity and latent correlation. All distributions are standardized as in Section~\ref{sec:scale} prior to distance computation.

For the NPT distance, we employ $m=100$ discretization points for quantiles. For comparison, we use the Wasserstein distance, the Sliced Wasserstein distance with $L \in \{10, 100\}$ slices \citep{bonnotte2013unidimensional}, and the Sinkhorn distance with regularization $\varepsilon \in \{1, 10\}$ \citep{cuturi2013sinkhorn}, computed using the Python Optimal Transport library \citep{flamaryPOTPythonOptimal2021}.  We do not compare with Hybrid Wasserstein distance \citep{verdinelliHybridWassersteinDistance2019} as we were unable to find an open implementation. All distances are computed based on the $n = n_i = 200$ realizations from each of the $N$ distributions across all settings to ensure fair comparison of runtimes. Since Wasserstein distance based on $n=200$ realizations may not be an accurate approximation of the true Wasserstein distance in higher dimensions \citep{fournierRateConvergenceWasserstein2015}, we also compute Wasserstein distance based on 2000 realizations to be used as ground truth in accuracy comparisons.

To measure runtimes, we use the \texttt{microbenchmark} R package \citep{mersmann2024microbenchmark}. We evaluate both the single distance and the full pairwise distance matrix computation times. For the single-distance computation, we select two distributions at random from the pool of $N = 100$ distributions, and evaluate the mean time across 100 replications. For the full pairwise distance matrix, we consider only 5 replications. All evaluations are performed on a single CPU core (Intel Xeon Gold 6230) independently for each setting. 

To assess accuracy, we compare each method against the Wasserstein distance computed with 2000 realizations by constructing boxplots of the differences $ d^2_{\text{Wass}} - d^2_{\text{method}}$ across all $N(N-1)/2$ distribution pairs; here, we rescale the Sliced Wasserstein distance by a factor of $d$, thus using $d\cdot d_{\text{Sliced}}^2$, since it is known that $d \cdot d_{\text{Sliced}}^2 \le d_{\text{Wass}}^2$ when the number $L$ of slices tends to infinity \citep{bonnotte2013unidimensional}. We also evaluate concordance across pairs using Pearson and Spearman correlation coefficients.

\begin{table}[!t]
\centering
\spacingset{1}
\caption{Mean computation times (milliseconds) for single and matrix distance computations across dimensions in the simulation study. Standard deviations are shown in parentheses.}
\resizebox{0.9\columnwidth}{!}{%
\begin{tabular}{
l
S[table-format=4.0]
S[table-format=7.0]
S[table-format=4.0]
S[table-format=7.0]
}
\toprule
& \multicolumn{2}{c}{$d = 2$}
& \multicolumn{2}{c}{$d = 5$} \\
\cmidrule(lr){2-3} \cmidrule(lr){4-5}
\textbf{Method}
& \textbf{Single} & \textbf{Matrix}
& \textbf{Single} & \textbf{Matrix} \\
\midrule
NPT
& {\bfseries 1 (${<}1$)}
& {\bfseries 2 (${<}1$)}
& {\bfseries 2 (${<}1$)}
& {\bfseries 5 (${<}1$)} \\
Wasserstein (200 samples)
& {9 (${<}1$)}
& {42878 (529)}
& {9 (${<}1$)}
& {44461 (2006)} \\
Wasserstein (2000 samples)
& {991 (41)}
& {4931376 (26633)}
& {1202 (65)}
& {5773690 (13681)} \\
Sliced Wasserstein (10 slices)
& {3 (${<}1$)}
& {11919 (170)}
& {3 (${<}1$)}
& {12338 (473)} \\
Sliced Wasserstein (100 slices)
& {9 (${<}1$)}
& {44095 (124)}
& {9 (${<}1$)}
& {46117 (804)} \\
Sinkhorn ($\epsilon = 1$)
& {30 (26)}
& {81534 (6307)}
& {55 (30)}
& {150546 (8997)} \\
Sinkhorn ($\epsilon = 10$)
& {5 (5)}
& {16898 (54)}
& {6 (8)}
& {16969 (2037)} \\
\bottomrule
\end{tabular}%
}
\label{tab:benchmark_all_dimensions}
\end{table}

Table \ref{tab:benchmark_all_dimensions} presents the mean computation times (in milliseconds) for a single distance and the full distance matrix across all distances separately for each dimension. NPT is the fastest distance, with its computational advantage over other distances being particularly pronounced in matrix settings. For instance, when $d=5$, NPT is 8,890 folds faster than Wasserstein distance (200 samples) and 2,460 folds faster than Sliced Wasserstein distance (10 slices).

\begin{figure}[!t]
  \centering
  \spacingset{1}
  \includegraphics[width=0.9\textwidth]{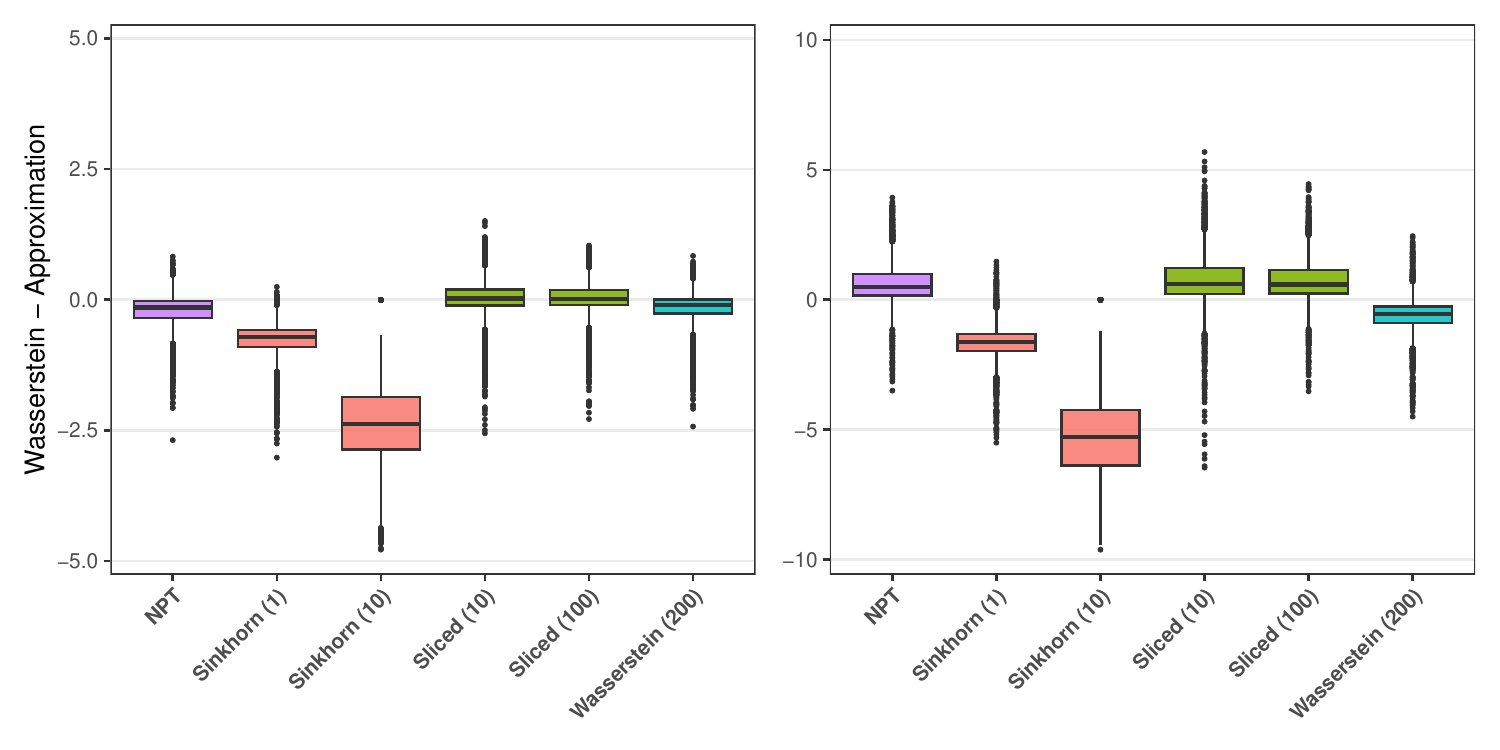}
  \caption{Absolute error comparison for $d=2$ (left) and $d=5$ (right). Differences $d^2_{\text{Wass}} - d^2_{\text{method}}$ are evaluated across $N(N-1)/2$ pairwise distances ($N=100$) with all methods based on $n=200$ realizations compared against the Wasserstein distance on 2000 realizations (treated as ground truth).}
  \label{fig:sim_error}
\end{figure}

Figure \ref{fig:sim_error} displays the boxplots of the absolute differences for each metric and dimension. The Sinkhorn distance exhibits the highest bias throughout. A small regularization parameter ($\epsilon = 1$) yields reduced bias and lower variability at the expense of higher computational cost, even slower than Wasserstein distance (200 samples); conversely, a larger value ($\epsilon = 10$) ensures faster computation but at the cost of significant bias and higher variability relative to the ground truth. Similarly, the Sliced Wasserstein distance demonstrates a trade-off: using 10 slices offers faster computation but higher variability, while 100 slices improves precision at a higher computational cost. Across both dimensions, NPT exhibits variance comparable to most other distances
and bias comparable to Sliced Wasserstein and Wasserstein distance ($n=200$). 
Notably, NPT achieves accuracy comparable to Wasserstein distance ($n=200$) in both dimensions. While Wasserstein distance ($n=200$) deviates further from the true distance as dimensionality increases from $d=2$ to $d=5$, which reflects the curse of dimensionality \citep{fournierRateConvergenceWasserstein2015}, NPT maintains a consistent error magnitude across dimensions, demonstrating superior robustness to increasing dimensionality. 
Complementary scatterplots and correlation coefficients are provided in Figures S1 and S2 in the Supplementary Material. At $d=2$, NPT achieves the highest alignment with the ground truth (Spearman $\rho = 0.93$), surpassing even the Wasserstein distance calculated at $n=200$ ($\rho = 0.92$). At $d=5$, NPT maintains a strong correlation of $\rho = 0.92$, only slightly exceeded by Sinkhorn ($\epsilon=1$) and the $n=200$ Wasserstein distance ($\rho = 0.96$). Overall, these results demonstrate that NPT closely approximates the underlying Wasserstein distance and preserves the relative ordering of distributional distances effectively, making it well-suited for downstream analyses.

\section{Application to SHHS data}\label{sec:data}

In this section, we analyze data from the Sleep Heart Health Study (SHHS) \citep{Punjabi2009}, a multicenter prospective cohort study designed to evaluate the consequences of sleep-disordered breathing (SDB). SHHS enrolled participants aged 40 years or older between 1995 and 1998 who had no prior treatment of SDB.  Participants underwent overnight home polysomnography using a portable monitor. The ABOSA software \citep{karhu2022abosa} was used to identify overnight desaturation events (temporary drops in oxygen saturation occurring during non-wake sleep stages), and to extract the depth and duration of each event (at least 4\% deep and at least 7 seconds long).

Traditionally, the oxygen desaturation index (ODI), defined as the number of desaturation events divided by the total sleep duration (without wake periods), serves as the primary metric of SDB severity. The ODI $\ge 30$ threshold represents severe sleep apnea, and is significantly associated with increased all-cause and cardiovascular mortality \citep{Punjabi2008}. While ODI provides a summary count of events, it collapses the underlying heterogeneity in event characteristics. We focus our analysis on the $n=723$ subjects meeting the severe SDB criteria (ODI $\ge 30$). In this cohort, each subject's empirical bivariate distribution (depth and duration) is based on a substantial number of desaturation events (averaging 335 events per subject, range 136–962). Figure~\ref{fig:dist_scatter} displays scatterplots with kernel density estimates for three randomly selected subjects spanning lower, moderate, and higher ODI levels, illustrating substantial heterogeneity in the joint distribution of depth and duration.

On this resulting distributional data, we pursue three analysis goals. For all analyses, we standardize each subject-level distribution as described in Section~\ref{sec:scale}. First, we compare the computational performance of the squared NPT metric relative to the squared 2-Wasserstein distance and its variants, mirroring the settings of our simulation studies. Second, we compare accuracy. Third, we construct multidimensional scaling embeddings from the pairwise squared-distance matrix of NPT and compare the resulting visualizations with those obtained from the squared Wasserstein distance. We also examine whether these representations align with clinical characteristics, including ODI and age. 

\begin{figure}[!t]
 \centering
 \spacingset{1}
 \includegraphics[width=0.8\textwidth]{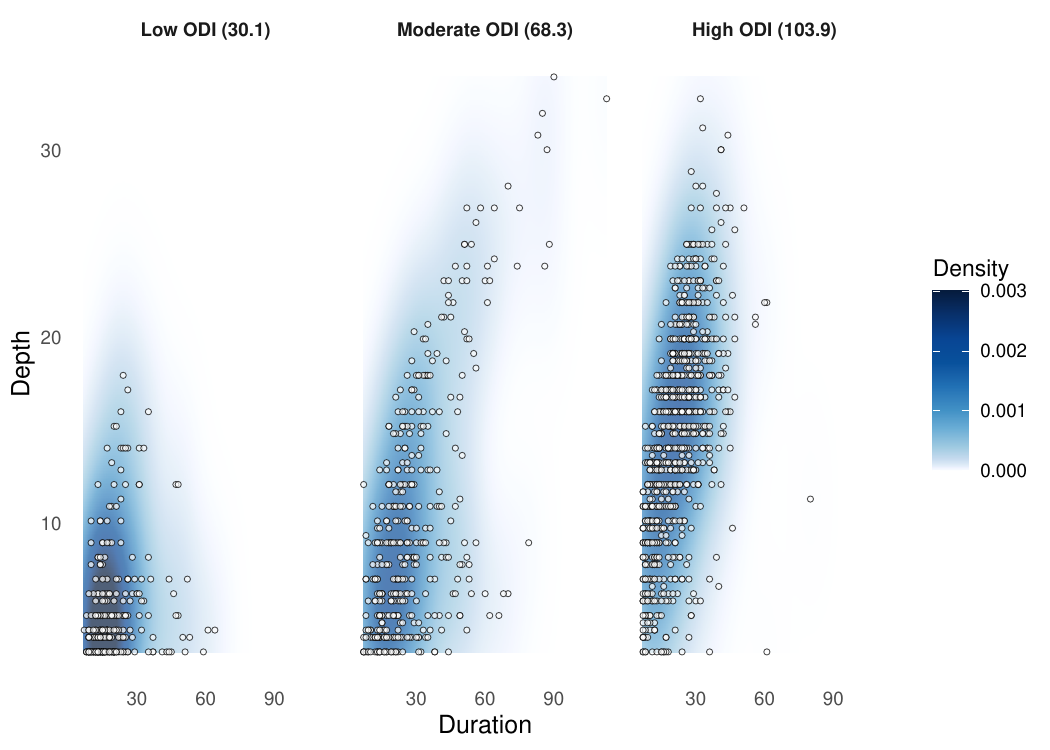}
\caption{Scatter plots with overlaid kernel density estimates of bivariate measurements (duration and depth) of desaturation events for three SHHS subjects. Values in parentheses denote the corresponding ODI for each subject.}
 \label{fig:dist_scatter}
\end{figure}

\paragraph{Run Time.}
We record runtimes for computing both a single pairwise distance and the full distance matrix for all 723 subjects using each squared distance metric, measured with the \texttt{microbenchmark} package \citep{mersmann2024microbenchmark}. The single distance is computed between the two subjects with the most desaturation events (962 and 919) over 100 replications, while the full distance matrix is computed over 5 replications. The runtimes were recorded using a single CPU core (Intel Xeon Gold 6230).

Table~\ref{tab:benchmark_time} reports the average runtime for each distance. NPT significantly outperforms the Wasserstein distance and its variants, achieving at least a 3$\times$ speedup for single distance computations and over a 1400$\times$ speedup for full distance matrix computations compared to the fastest variant (Sliced Wasserstein with 10 slices).

\begin{table}[!t]
\centering
\spacingset{1}
\caption{Mean computation times (milliseconds) for single and matrix distance computations ($d=2$, $N=723$) on the SHHS data. Standard deviations are shown in parentheses.}
\resizebox{0.8\columnwidth}{!}{%
\begin{tabular}{
l
S[table-format=3.0(2)]
S[table-format=8.0(6)]
}
\hline
\textbf{Method} & 
{\textbf{Single Distance}} & 
{\textbf{Distance Matrix}} \\
\hline
NPT & \multicolumn{1}{c}{\bfseries 2 ($<1$)} & \bfseries 423(29) \\
Wasserstein & 246(28) & 6857471(220481) \\
Sliced Wasserstein (10 slices) & 6(3)& 622782(26464) \\
Sliced Wasserstein (100 slices) & 43(2) & 3738090(176530) \\
Sinkhorn ($\epsilon=1$) &  53(5)& 4375114(302144) \\
Sinkhorn ($\epsilon=10$) & 38(5) & 1538067(1835) \\
\hline
\end{tabular}%
}
\label{tab:benchmark_time}
\end{table}

\paragraph{Metric agreement.} 

\begin{figure}[!t]
 \centering
 \spacingset{1}
 \includegraphics[width=1.0\textwidth]{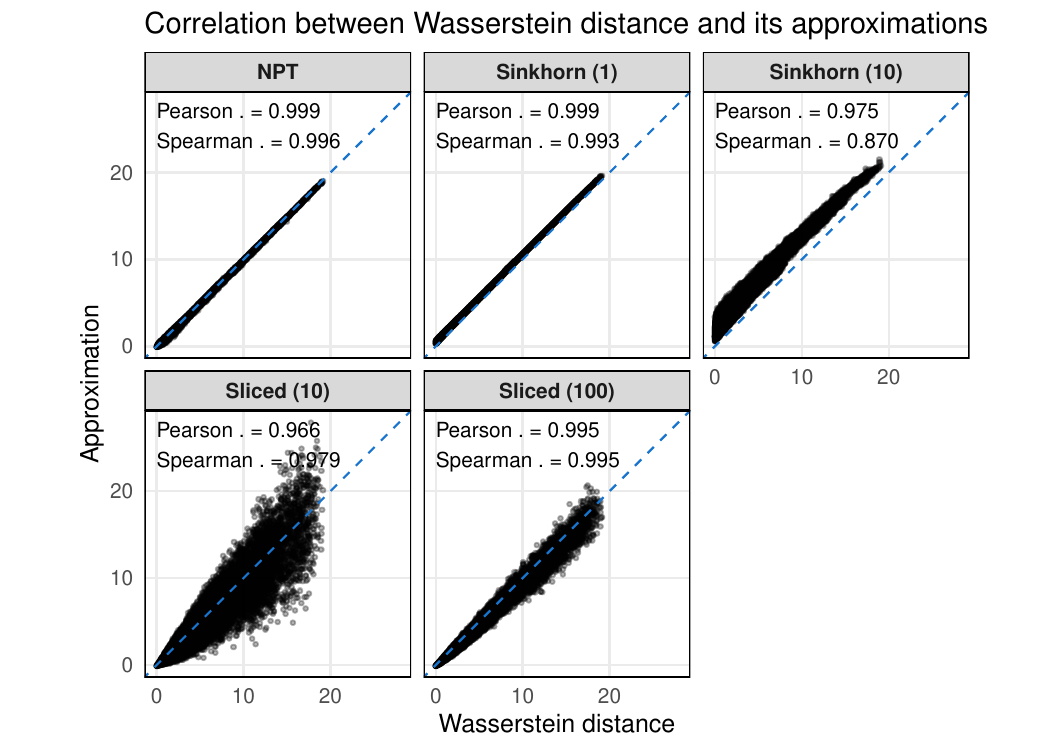}
 \caption{Scatterplot comparing distances to the reference Wasserstein distance with correlation coefficients for SHHS data. The horizontal axis represents the Wasserstein distance, while the vertical axis displays the estimated distance for each approximation. Points falling on the dashed diagonal identity line ($y=x$) indicate a perfect match between the approximation and the Wasserstein distance.}
 \label{fig:scatter_real}
\end{figure}

Figure~\ref{fig:scatter_real} evaluates the agreement of NPT and Wasserstein-based variants with the Wasserstein distance, where the Sliced Wasserstein distance is multiplied by $d$ as in simulations. The scatterplots represent paired distances across all 723 subjects, with Spearman and Pearson correlation coefficients summarizing the strength of agreement. NPT shows the closest alignment with the Wasserstein distance, exhibiting minimal variance and no visual bias, with both correlation coefficients very close to 1. 

Figure S3 in the supplementary material presents boxplots of the pairwise differences 
$d^2_{\text{Wass}} - d^2_{\text{method}} $ comparing NPT and the Wasserstein-based variants to the Wasserstein distance across all 723 participants. Consistent with the simulation results, NPT exhibits variance and bias comparable to the variants of Wasserstein distance, while producing fewer outliers.

\paragraph{Multidimensional Scaling (MDS).} 
MDS is a dimensionality reduction technique that constructs a low-dimensional Euclidean embedding of the data based on the pairwise distance matrix \citep{GroenenVelden2014MDS}. Specifically, for $N$ distributions, given $N\times N$ matrix $D^{(2)}$ of squared pairwise distances, MDS produces embeddings based on the spectral decomposition of the matrix
\[
B = -\tfrac12\, J D^{(2)} J, \qquad
J = I_N - \tfrac1N \mathbf{1_N}\mathbf{1_N}^\top.
\]
Here, $J$ is the centering matrix that ensures the resulting embeddings are centered at the origin, and $B$ represents the approximated inner products of the points in a latent Euclidean space. The coordinates of the new $r$-dimensional embedding, with $r < N$, are derived from the top $r$ eigenvalues and their corresponding eigenvectors, where we fix $r=2$ for illustration. Thus, MDS effectively transforms the distances into a 2D scatterplot, where the proximity between points serves as a direct visual proxy for the similarity between their original distributions.

We present visualizations from MDS applied to the squared NPT and Wasserstein distance matrices for all 723 subjects.
\begin{figure}[!t]
 \centering
 \spacingset{1}
 \includegraphics[width=1\textwidth]{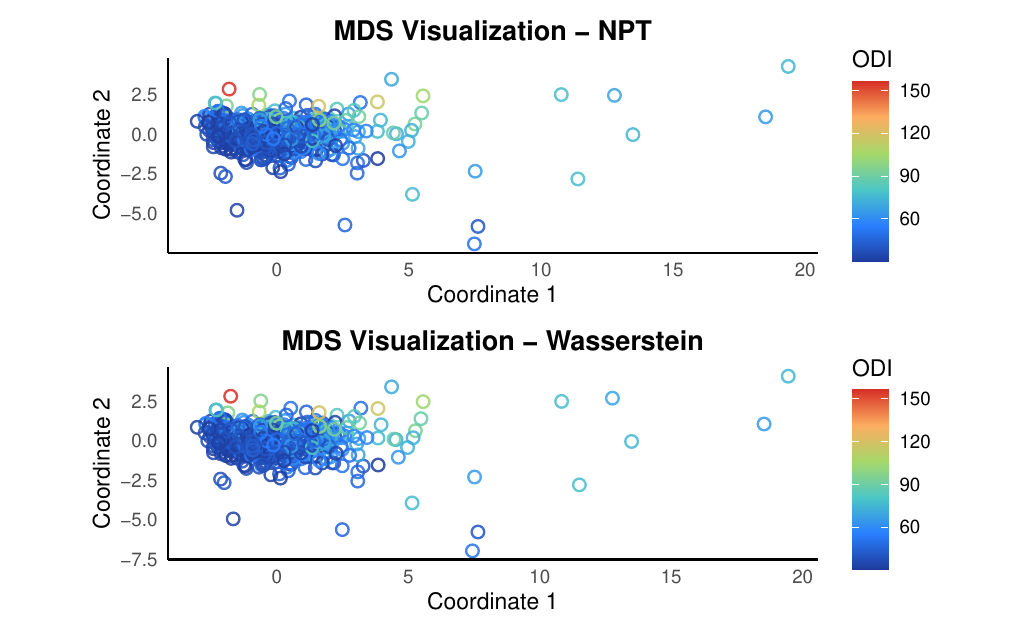}
 \caption{Scatterplots of the first two coordinates from a 2-dimensional MDS for NPT (top) and Wasserstein (bottom) distance metrics on SHHS data. Hollow circles represent individual distributions and are color-coded by ODI.}
 \label{fig:mds_odi}
\end{figure}
Figure \ref{fig:mds_odi} provides a comparison of scatterplots of the two coordinates from a 2-dimensional MDS of the squared NPT and Wasserstein distances, with points color-coded by ODI score. The figure shows that the MDS embeddings derived from NPT align closely with those derived from the Wasserstein distance. 
Additionally, for both NPT and Wasserstein distances, higher positive values along the second MDS coordinate tend to correspond to higher ODI values, revealing an ODI-related structure in the embedding.
We also analyze MDS visualizations derived from NPT and the Wasserstein distance using age (see Figure S4 in the Supplementary Material). The MDS visualizations reveal that lower values along the second coordinate tend to correspond to older individuals, capturing an age-related pattern in the embedding.

\section{Discussion}\label{sec-disc}
In this work, we propose a novel distance metric for quantifying the distance between distributional objects, termed the nonparanormal transport (NPT) metric. As a computationally efficient alternative to the Wasserstein distance, the NPT metric provides a scalable framework for calculating distances between multivariate distributions and constructing the resulting pairwise distance matrix. Our numerical studies, encompassing both simulated scenarios and the SHHS dataset, demonstrate that the NPT metric closely aligns with the 2-Wasserstein distance while offering substantially improved computational performance. These results demonstrate that NPT effectively removes the computational bottleneck associated with solving the optimal transport problem for the Wasserstein distance, enabling such multivariate analyses on a much larger scale than previously feasible.
Promising future directions for applying NPT include generative models, where Wasserstein distance and its variants have been instrumental in improving training stability, yet often at considerable computational costs \citep{arjovsky2017wasserstein, deshpande2018generative, kolouriSlicedWassersteinAutoEncoders2019}.

While NPT relies on a semiparametric family that is significantly more flexible than the Gaussian model, it is still more constrained than fully general multivariate distributions. In particular, nonparanormal models are ill-suited for capturing tail dependencies, such as those found in financial data, and they only accommodate continuous distributions, whereas binary, ordinal, and zero-inflated data are common in practice. However, extended distribution families based on latent Gaussian copulas have been developed to accommodate such data types \citep{fanHighDimensionalSemiparametric2017, huangLatentcorPackageEstimating2021, yoonSparseSemiparametricCanonical2020}, suggesting that NPT can potentially be extended to these domains in future work. Finally, while here we demonstrated empirically that NPT aligns closely with the Wasserstein distance, complementing these observations with theoretical analysis remains challenging, since closed-form expressions for the Wasserstein distance are generally unavailable even for the nonparanormal family, and thus constitutes an important avenue for future research. The accompanying R package \texttt{NonparaTransport} is available at \url{https://github.com/IrinaStatsLab/NonparaTransport}.

\section*{Funding}\label{Funds}
This research was supported by NIH R01HL172785.

\appendix

\section{Proof of Proposition~2.1}
\label{app:proofs-prop123}

In this section, we present the detailed proof of
Proposition~2.1.

\begin{proof}[Proof of Proposition~2.1]

Let
\[
  P_i=\mathrm{NPN}(0,\Sigma_i,f_i), \quad
  P_l=\mathrm{NPN}(0,\Sigma_l,f_l), \quad
  P_\alpha=\mathrm{NPN}(0,\Sigma_\alpha,f_\alpha)
\]
be any three nonparanormal distributions.
Recall that
\[
  d_{\mathrm{NPT}}(P_i,P_l)
  =
  \Bigg(
    \sum_{j=1}^{d}
    d_{\mathcal W}^2\!\bigl(X^{(i)}_{j},X^{(l)}_{j}\bigr)
    +
    d_{\mathcal B}^2(\Sigma_i,\Sigma_l)
  \Bigg)^{1/2}
\]
satisfies the metric axioms on the space of nonparanormal distributions
with finite second moments, where
\[
  d_{\mathcal B}(\Sigma_i,\Sigma_l)
  =
  \Bigl(
    \operatorname{Tr}\bigl[
      \Sigma_i + \Sigma_l
      - 2(\Sigma_i^{1/2}\Sigma_l\Sigma_i^{1/2})^{1/2}
    \bigr]
  \Bigr)^{1/2}.
\]
This quantity is the Bures distance between latent correlation matrices,
which satisfies the metric axioms on the set of positive semidefinite
matrices \citep{bhatiaBuresWassersteinDistance2019}.

To prove $d_{\mathrm{NPT}}$ is a metric on the space of nonparanormal
distributions, we verify the following axioms:
\begin{enumerate}
  \item \textit{Positivity:} $d_{\mathrm{NPT}}(P_i, P_l) \ge 0$.
  \item \textit{Identity:} $d_{\mathrm{NPT}}(P_i,P_l)=0 \iff P_i = P_l$.
  \item \textit{Symmetry:} $d_{\mathrm{NPT}}(P_i, P_l) = d_{\mathrm{NPT}}(P_l, P_i)$.
  \item \textit{Triangle inequality:}
        $d_{\mathrm{NPT}}(P_i,P_l) \leq
         d_{\mathrm{NPT}}(P_i,P_\alpha) + d_{\mathrm{NPT}}(P_\alpha,P_l)$.
\end{enumerate}
Positivity and symmetry follow immediately from the definition.

\paragraph{Identity.}

Suppose $P_i=P_l$. Then they share the same marginals
$X_j^{(i)} = X_j^{(l)}$ and the same latent correlation matrices
$\Sigma_i = \Sigma_l$, since a nonparanormal distribution is uniquely
determined by its marginal distributions and latent correlation matrix
\citep{liuNonparanormalSemiparametricEstimation2009}. By definition of $d_{\mathrm{NPT}}$, we
obtain $d_{\mathrm{NPT}}(P_i, P_l) = 0$.

Conversely, suppose $d_{\mathrm{NPT}}(P_i,P_l)=0$. Since each term in
$d_{\mathrm{NPT}}$ is nonnegative, it follows that
\[
  d_{\mathcal W}\!\bigl(X^{(i)}_{j},X^{(l)}_{j}\bigr)=0
  \quad \text{for all } j=1,\dots,d,
  \qquad
  d_{\mathcal B}(\Sigma_i,\Sigma_l)=0.
\]
Since $d_{\mathcal W}$ and $d_{\mathcal{B}}$ are metrics,
\[
  X^{(i)}_{j} = X^{(l)}_{j}
  \quad \text{for all}\quad j=1,\dots,d, \qquad\text{and}\qquad
  \Sigma_i = \Sigma_l.
\]
We thus conclude $P_i = P_l$, as a nonparanormal distribution is
uniquely determined by its marginal distributions and its latent
correlation matrix.

\paragraph{Triangle inequality.}

We consider the individual terms in $d_{\mathrm{NPT}}(P_i,P_l)$.
For each $j=1,\dots,d$, the univariate Wasserstein distance is a metric
on the space of univariate distributions and therefore satisfies
\[
  d_{\mathcal W}(X^{(i)}_{j},X^{(l)}_{j})
  \le
  d_{\mathcal W}(X^{(i)}_{j},X^{(\alpha)}_{j})
  + d_{\mathcal W}(X^{(\alpha)}_{j},X^{(l)}_{j}).
\]
Squaring both sides,
\[
  d^2_{\mathcal W}(X^{(i)}_{j},X^{(l)}_{j})
  \le
  \bigl(d_{\mathcal W}(X^{(i)}_{j},X^{(\alpha)}_{j})
       + d_{\mathcal W}(X^{(\alpha)}_{j},X^{(l)}_{j})\bigr)^2.
\]
On the other hand, the Bures distance is a metric on the space of
positive semi-definite matrices \citep{bhatiaBuresWassersteinDistance2019},
satisfying
\[
  d_{\mathcal B}(\Sigma_i,\Sigma_l)
  \le
  d_{\mathcal B}(\Sigma_i,\Sigma_\alpha)
  + d_{\mathcal B}(\Sigma_\alpha,\Sigma_l),
\]
and consequently
\[
  d^2_{\mathcal B}(\Sigma_i,\Sigma_l)
  \le
  \bigl(d_{\mathcal B}(\Sigma_i,\Sigma_\alpha)
       + d_{\mathcal B}(\Sigma_\alpha,\Sigma_l)\bigr)^2.
\]
Combining the individual squared inequalities and taking square roots,
\begin{align*}
  d_{\mathrm{NPT}}(P_i,P_l)
  &=
  \left(
    \sum_{j=1}^{d} d_{\mathcal W}^2(X^{(i)}_{j},X^{(l)}_{j})
    + d_{\mathcal B}^2(\Sigma_i,\Sigma_l)
  \right)^{1/2} \\
  &\le
  \left(
    \sum_{j=1}^{d}
      \Bigl( d_{\mathcal W}(X^{(i)}_{j},X^{(\alpha)}_{j})
           + d_{\mathcal W}(X^{(\alpha)}_{j},X^{(l)}_{j}) \Bigr)^2
    +
    \Bigl( d_{\mathcal B}(\Sigma_i,\Sigma_\alpha)
          + d_{\mathcal B}(\Sigma_\alpha,\Sigma_l) \Bigr)^2
  \right)^{1/2}.
\end{align*}
Since $d_{\mathcal W}, d_{\mathcal{B}} \geq 0$ and $d_{\mathrm{NPT}}$
is the $\ell^2$-norm sum of these metrics, we apply Minkowski's
inequality for sums \citep{rudinFunctionalAnalysis1991} to obtain
\begin{align*}
  d_{\mathrm{NPT}}(P_i,P_l)
  &\le
  \left(
    \sum_{j=1}^{d} d_{\mathcal W}^2(X^{(i)}_{j},X^{(\alpha)}_{j})
    + d_{\mathcal B}^2(\Sigma_i,\Sigma_\alpha)
  \right)^{1/2}
  +
  \left(
    \sum_{j=1}^{d} d_{\mathcal W}^2(X^{(\alpha)}_{j},X^{(l)}_{j})
    + d_{\mathcal B}^2(\Sigma_\alpha,\Sigma_l)
  \right)^{1/2} \\
  &= d_{\mathrm{NPT}}(P_i,P_\alpha) + d_{\mathrm{NPT}}(P_\alpha,P_l),
\end{align*}
which completes the proof.
\end{proof}

\FloatBarrier
\newpage

\section{Additional Simulation Results}

We present supplementary scatterplots from the simulation study,
comparing the alignment of NPT and variants of the Wasserstein distance
(plotted on the $y$-axis) against the reference Wasserstein distance
computed on 2000 samples ($x$-axis) across simulation settings with
$d = 2$ (Figure~\ref{fig:scatter_two}) and $d = 5$
(Figure~\ref{fig:scatter_five}) dimensions. The dotted blue lines
indicate perfect agreement.

\begin{figure}[!t]
  \centering
  \includegraphics[width=1.0\textwidth]{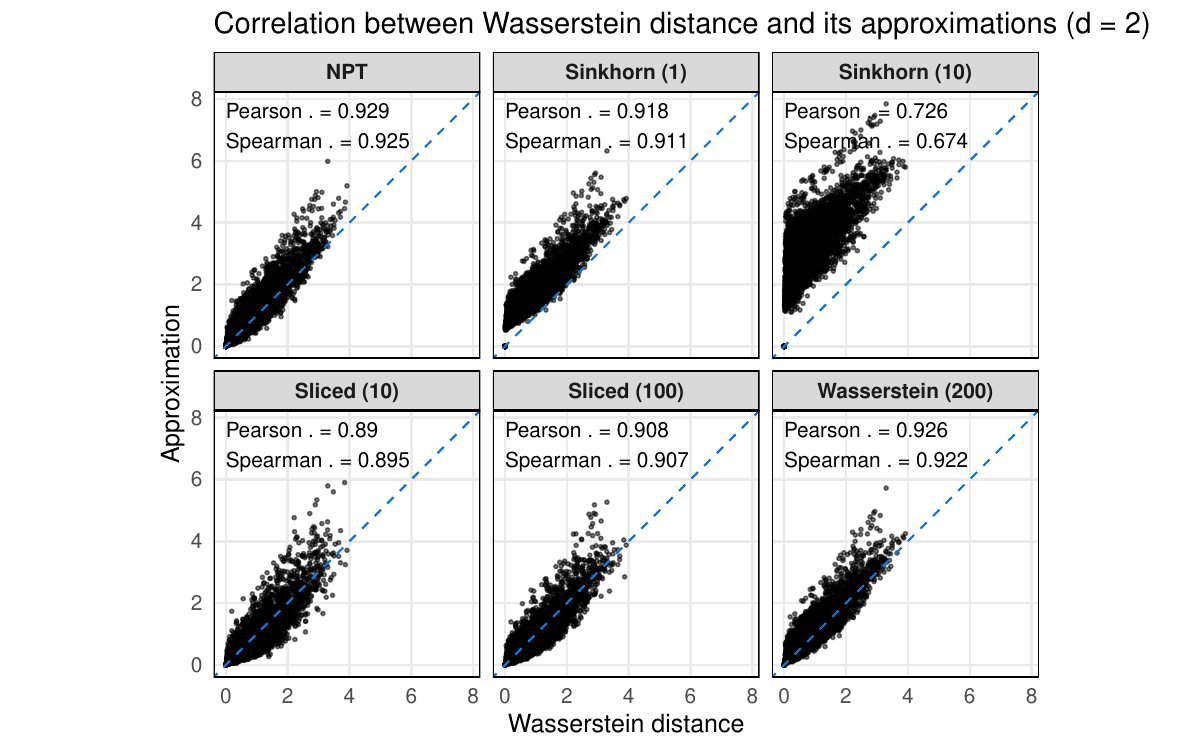}
  \caption{Scatterplot comparing distances to the reference Wasserstein
    distance (2000 samples) with correlation coefficients for the $d=2$
    simulation setting. The horizontal axis represents the Wasserstein
    distance estimated from 2000 samples for each distribution, while
    the vertical axis displays the estimated distance for each
    approximation. Points on the dashed diagonal identity line ($y=x$)
    indicate perfect agreement between the approximation and the
    reference.}
  \label{fig:scatter_two}
\end{figure}

\begin{figure}[!t]
  \centering
  \includegraphics[width=1.0\textwidth]{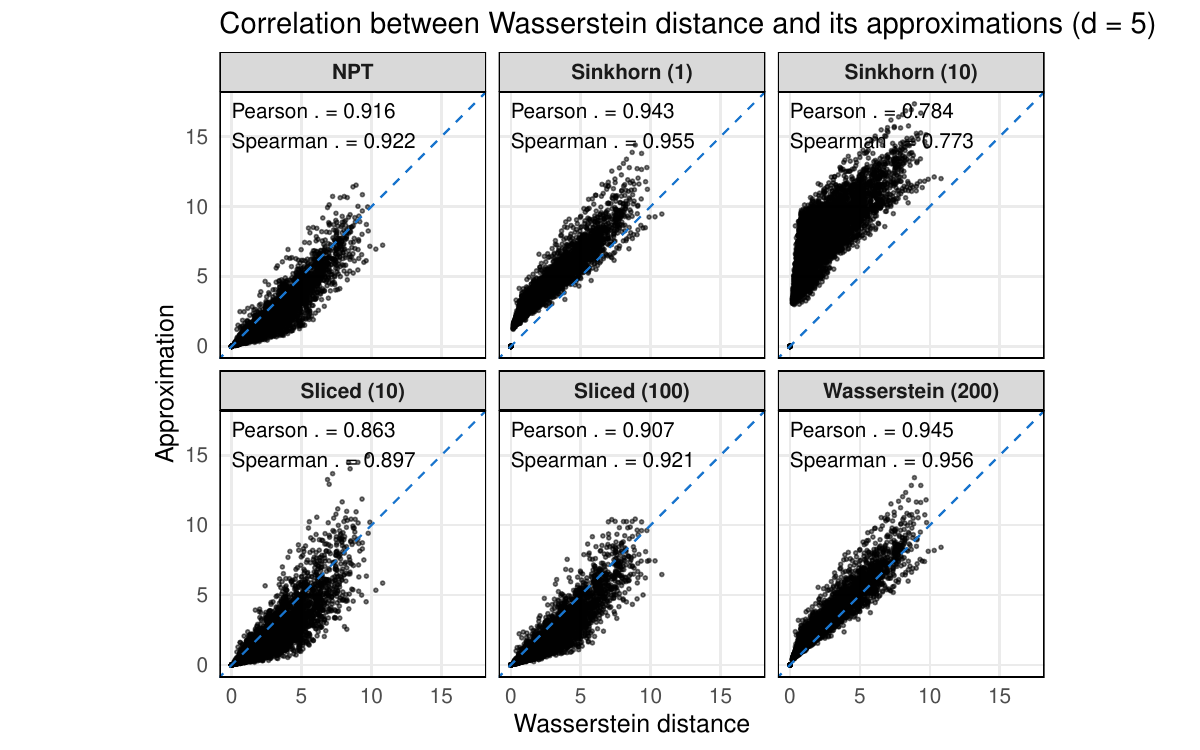}
  \caption{Scatterplot comparing distances to the reference Wasserstein
    distance (2000 samples) with correlation coefficients for the $d=5$
    simulation setting. The horizontal axis represents the Wasserstein
    distance estimated from 2000 samples for each distribution, while
    the vertical axis displays the estimated distance for each
    approximation. Points on the dashed diagonal identity line ($y=x$)
    indicate perfect agreement between the approximation and the
    reference.}
  \label{fig:scatter_five}
\end{figure}

\FloatBarrier
\newpage

\section{Additional Real Data Analysis Results}

We present additional visualizations from the real data analysis.
First, we show a boxplot of the absolute differences between the
Wasserstein distance and other distance measures for 723 SHHS subjects
with ODI $\geq 30$ (Figure~\ref{fig:real}). Next, we display
two-dimensional MDS plots of the NPT and Wasserstein distance matrices,
colored by age (Figure~\ref{fig:mds_age}).

\begin{figure}[!t]
  \centering
  \includegraphics[width=0.8\textwidth]{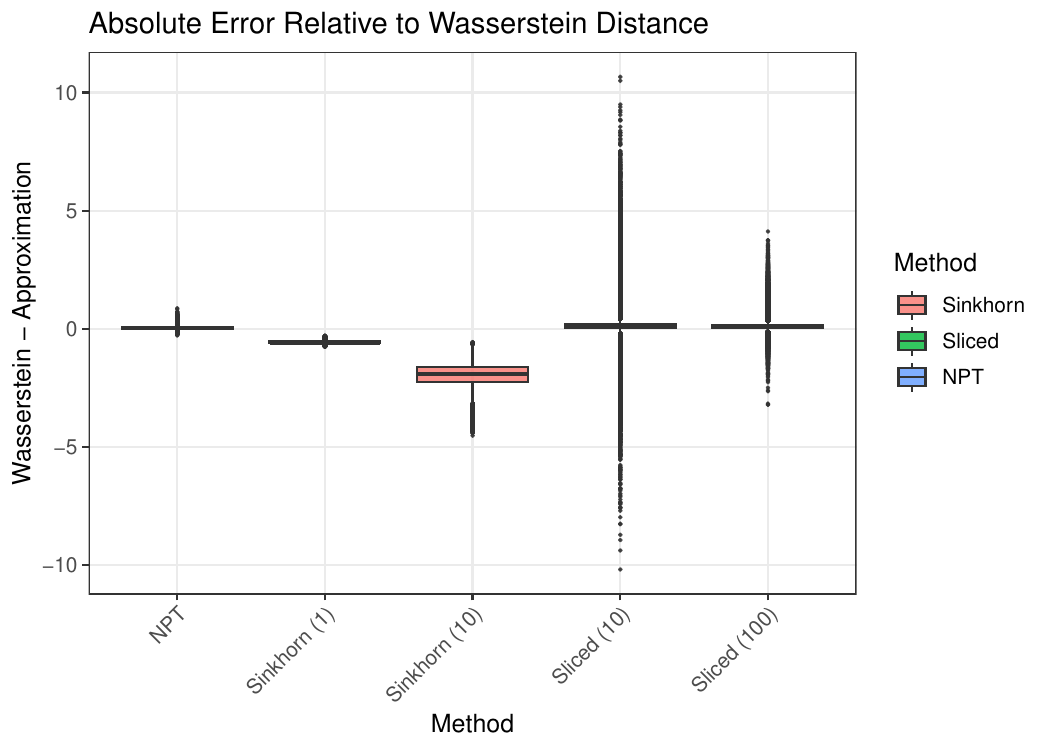}
  \caption{Absolute error comparison on SHHS data. Differences
    $d^2_{\text{Wass}} - d^2_{\text{method}}$ are evaluated across
    $N(N-1)/2$ pairwise distances ($N=723$) against the Wasserstein
    distance (treated as ground truth).}
  \label{fig:real}
\end{figure}

\begin{figure}[!t]
  \centering
  \includegraphics[width=1\textwidth]{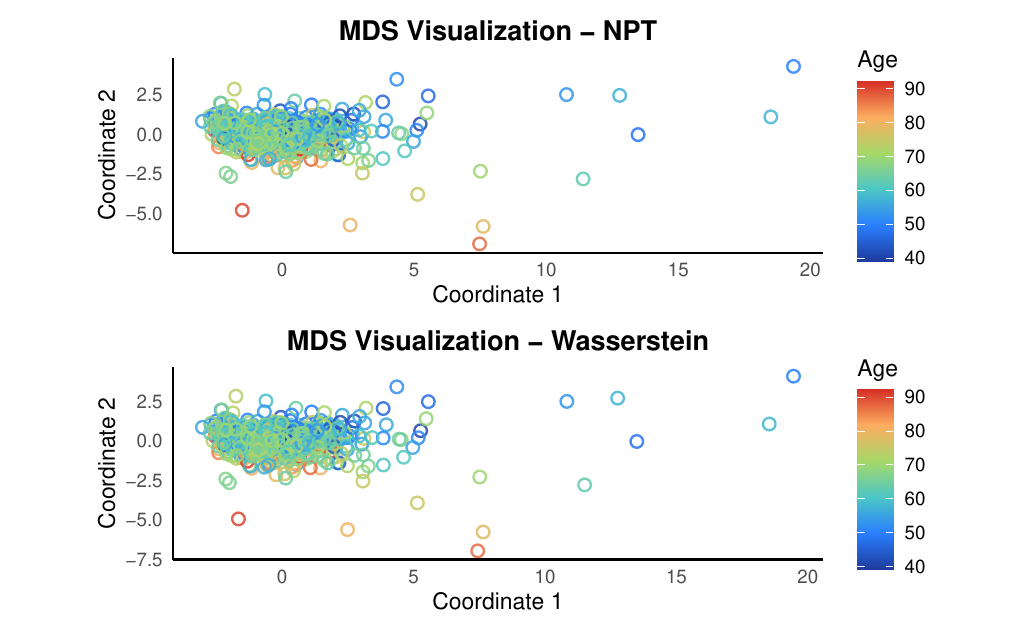}
  \caption{Scatterplots of the first two coordinates from a
    two-dimensional MDS for NPT (top) and Wasserstein (bottom) distance
    metrics on SHHS data. Hollow circles represent individual
    distributions and are color-coded by age.}
  \label{fig:mds_age}
\end{figure}

\FloatBarrier
\bibliographystyle{chicago}
\bibliography{RevisedRef}
\end{document}